\DeclareMathOperator*{\diag}{diag} 
\newcommand{\mA}{\mathbf{A}} 
\newcommand{\mQ}{\mathbf{Q}} 
\newcommand{\mI}{\mathbf{I}} 
\newcommand{\mF}{\mathbf{F}} 
\newcommand{\mD}{\mathbf{D}} 
\newcommand{\va}{\mathbf{a}} 
\newcommand{\vb}{\mathbf{b}} 
\newcommand{\vd}{\mathbf{d}} 
\newcommand{\vy}{\mathbf{y}} 
\newcommand{\vz}{\mathbf{z}} 
\newcommand{\vf}{\mathbf{f}} 
\newcommand{\vone}{\mathbf{1}} 
\DeclareMathOperator*{\E}{\mathbb{E}} 
\DeclareMathOperator*{\Cov}{\mathbf{Cov}} 
\newtheorem{lemma}{Lemma}
\newtheorem{propn}{Proposition}
\newtheorem{thm}{Theorem}
\DeclarePairedDelimiter\ceil{\lceil}{\rceil}
\DeclarePairedDelimiter\floor{\lfloor}{\rfloor}
\title{Near-Optimal Coded Apertures for Imaging via Nazarov's Theorem}
\name{Ganesh Ajjanagadde$^{\star}$ \qquad Christos Thrampoulidis$^{\dagger}$
    \qquad Adam Yedidia$^{\star}$ \qquad Gregory Wornell$^{\star}$
\thanks{This work was supported, in part, by DARPA under
Contract No. HR0011-16-C-0030, and by NSF under Grant No. CCF-1816209.}}
\address{${\star}$Department of Electrical Engineering and Computer Science,
Massachusetts Institute of Technology \\
${\dagger}$Department of Electrical and Computer Engineering, University of California, Santa Barbara}
\begin{document}
\ninept
\maketitle
\begin{abstract}
We characterize the fundamental limits of coded aperture imaging systems up to
universal constants by drawing upon a theorem of Nazarov regarding Fourier transforms.
Our work is performed under a simple propagation and sensor model that
accounts for thermal and shot noise, scene correlation, and exposure time.
Focusing on mean square error as a measure of linear reconstruction quality,
we show that appropriate application of a theorem of Nazarov leads to essentially
optimal coded apertures, up to a constant multiplicative factor in exposure time.
Additionally, we develop a heuristically efficient algorithm
to generate such patterns that explicitly takes into account scene correlations.
This algorithm finds apertures that correspond to local optima of a certain
potential on the hypercube, yet are guaranteed to be tight.
Finally, for i.i.d.\ scenes, we show improvements upon prior work by using
spectrally flat sequences with bias. The development focuses on 1D apertures for conceptual
clarity; the natural generalizations to 2D are also discussed.
\end{abstract}
\begin{keywords}
coded aperture cameras, computational photography, optical signal processing,
Fourier analysis
\end{keywords}

\section{Introduction}
\label{sec:intro}

Certain modern imaging systems, especially those operating at high
frequencies, use coded apertures. In these systems, a spatial mask that
selectively blocks light from reaching the sensor is used as opposed to a
traditional lens. The scene is then recovered by suitable post-processing.
Perhaps the earliest and simplest instance of coded aperture imaging is the
pinhole structure; see, e.g., \cite{young1971} for a survey. The development of
X-ray and gamma-ray astronomy gave rise to more sophisticated coded
apertures~\cite{ables1968,dicke1968} to get around the lack of lenses and
mirrors in such settings. Both proposed using random
blockage patterns with a specified mean transmittance as a method to increase the
aperture size as compared to the classical pinhole while retaining its resolution
benefits.

More modern developments include the usage of uniformly redundant arrays (URA)
to improve upon random on-off patterns~\cite{fenimore1978}, anti-pinhole imaging~\cite{cohen1982},
as well as the combining of mask and lens in order to, e.g., facilitate depth
estimation~\cite{levin2007}, deblur out-of-focus elements in an
image~\cite{zhou2011}, enable motion deblurring~\cite{raskar2006}, and/or
recover 4D lightfields~\cite{veeraraghavan2007}. Even more recent work seeks
to forgo lenses altogether to decrease costs and meet physical
constraints~\cite{duarte2008,asif2017}. Understanding coded apertures is also
relevant in non-line-of-sight applications where masks naturally occur as
scene occlusions~\cite{torralba2012,thrampoulidis2018}.

In light of the increased importance of coded apertures, prior work~\cite{yedidia2018}
described a model under which they can be analyzed. This model
uses far-field geometric optics to model light propagation and a sensor model
that includes thermal and shot noise components. Together with mutual
information (MI) as a performance metric,~\cite{yedidia2018} compared the classical
random on-off apertures~\cite{ables1968,dicke1968} of varying intensity to the
``spectrally flat'' patterns with transmissivity $1/2$ (same as the URA
of~\cite{fenimore1978}). Among other things, the analysis showed that when
shot noise dominates thermal noise, randomly generated masks with lower
transmissivity than $1/2$ offered greater performance compared to spectrally flat patterns
of transmissivity $1/2$.

This paper extends the work of~\cite{yedidia2018} in multiple respects that may be broadly
grouped into the following three main contributions.

First, we refine the model of~\cite{yedidia2018} by incorporating exposure time.
Here, we analyze linearly-constrained minimum mean square error (LMMSE)
estimatation as opposed to MI given its direct operational relevance, though we
remark in advance that our conclusions carry over to the MI criterion used in~\cite{yedidia2018}; see
Sec.~\ref{sec:conclusion}.

Second, we remark upon the existence and construction of spectrally flat sequences
with transmissivities $1/8, 1/4$ in addition to $1/2$. This
extends the range of parameters where we have a sharp characterization of
optimal coded apertures in our framework, and gives a tight answer to the
problem of optimal coded apertures for i.i.d.\ scenes; see Props.~\ref{propn:spec_flat_simple},
~\ref{propn:spec_flat_comp} for precise statements.

Third, we provide optimal (up to a universal constant) coded apertures, both
in 1D as well as in 2D, applicable for any prior on the spectrum of the scene at hand.
The sense of tightness of the optimality is given precisely in Prop.~\ref{propn:nazarov_power}.
This includes (but is not limited to) the naturally occuring power law~\cite{millane2003}($f^{-\gamma}$-prior).
Our aperture design naturally varies depending on the choice of prior, and we
provide a (heuristically) efficient greedy algorithm for their generation.
Essentially all the required mathematical results stem from a beautiful theorem
of Nazarov~\cite[p. 5]{nazarov1997}
combined with classical waterfilling for spectrum allocation. We note
that~\cite[pp. 9-11]{boche2016} has identified other applied problems for which
Nazarov's theorem provides conceptual clarity and/or solutions.

\section{Model}

We first describe our model, and discuss how it differs from that in
\cite{yedidia2018}.  We use the standard Poisson model of classical
optics for photon counting, and emphasize its dependence on the
exposure time $t$.   The analysis of MI under Poisson models is
cumbersome, and even with mean square error (MSE) it is often unclear
how to achieve optimal MSE in practice.  As such, the standard
estimation process is linear; indeed, the work
of~\cite{ables1968,dicke1968} used correlation decoders. In fact, both
~\cite{ables1968,dicke1968} give beautiful analog realizations of such
a decoder. Accordingly, we emphasize LMMSE.  We note that if one used a
Gaussian model instead, LMMSE is the same as MSE, and MSE is in turn
essentially equivalent to MI in the low SNR
limit~\cite{stam1959,guo2005}.  LMMSE depends purely on first and
second moments, so in our mathematical study we do not emphasize the
specific Poisson statistics.

We use a 1D model, as in~\cite{yedidia2018}, to simplify the
exposition of the concepts and results.  We emphasize that all of the
results of this paper generalize naturally to the analogous 2D model,
whose discussion we defer to Sec.~\ref{sec:conclusion}.

Let $\vf$ denote the intensities of the unknown 1D scene of length $n$
of expected total power $J$. Let $\E[\vf]=(J/n)\vone,
\Cov[\vf]=\mQ$. We assume $\mQ$ is circulant and diagonalized as $\mQ
= \mF_n^* \mD \mF_n$; $\mF_n$ is the unitary discrete Fourier
transform (DFT) matrix and $\mD =
\diag(\vd)$. The measurements at the imaging plane are denoted $\vy_j,
j \in [n]$ and the $n \times n$ transfer matrix $\mA$ models the
aperture.  We assume its entries all satisfy $0 \le \mA_{ji} \le 1/n$
to model that the light can not be redirected, and $\sum_{j} \mA_{ji}
\le 1$ to model local conservation of power.  An ideal, perfectly
focused, lens may be treated in this setup by $\mA = \mI$, as it
redirects light perfectly.

We assume $\mA_{ji} = (1/n)\va_{i-j \pmod n}$ for a $\va \ge 0$, i.e. $\mA$ is
circulant.
Let $\rho(\va) = (1/n) \sum_i \va_i$ be the~\emph{transmissivity} of the aperture.
The noise component is denoted by $\vz$ and its statistics are given by
$\E[\vz] = 0, \Cov[\vz] = (t(W+J\rho)/n)\mI$, where $W,J$ correspond to thermal
and shot noise respectively, and $t$ is the \emph{exposure time}.
With these, our measurement model is then given by $\vy_j = t\sum_i \mA_{ji}\vf_i + \vz_j$,
which leads to the following expression for the LMMSE of estimating $\vf$ from $\vy$:
\begin{equation}
\label{eqn:lmmse}
m(n,t,W,J,\vd,\va) = \sum_{i=0}^{n-1}
\frac{1}{\frac{1}{\vd_i}+\frac{t|\hat{\va}_i|^2}{n(W+J\rho(\va))}}.
\end{equation}
Here, $\hat{\va}$ is the DFT of $\va$. In general, we
assume $\vd_i = (1/n)d(i/n)$ are $n$ equally spaced samples from a
nonnegative, bounded, continuous function $d(x)$ on $[0,1]$ with symmetry $d(x)=d(1-x)$
and normalized so that $d(0)=\theta$. For example, i.i.d.\ scenes correspond to
$d(x)=\theta$. We note that our main result, Prop.~\ref{propn:nazarov_power},
holds in greater generality. The above restriction on the form of $\vd$ simply
ensures correct physical scaling (invariant with respect to $n$) of the variance
of total scene intensity coming from an arbitrary direction.

It is instructive to compare an ideal lens to a
mask with respect to \eqref{eqn:lmmse}, as a function of exposure
time. An ideal lens satisfies $\mA = \mI$, (i.e., $\va =
(n,0,\dots,0)$). Thus $\hat{\va} = (n,n,\dots,n)$. Then
from~\eqref{eqn:lmmse}, it can be readily seen that for a $t$ growing
with $n$ (say $t = \log(n)$), the LMMSE decays to $0$ as $n
\rightarrow \infty$.  On the other hand, the entry-wise restriction
$\va \in [0,1]$ that holds for a mask results in a significant
reduction in $\|\hat{\va}\|_2$. Due to this, in order to get an LMMSE
that is bounded away from the trivial $\int d(x)\,\mathrm{d}x$, one
needs an exposure time that is $\Omega(n)$. Of course, this is not
surprising; there are strong benefits to lenses when they are
available. The need for long exposure times for coded apertures is
also a known phenomenon, consistent with the emphasis
of~\cite{dicke1968} on ``hypothesis tests'' between scenes as opposed
to resolving full detail.

One way to interpret increased $t$ is that it reduces noise relative to the signal.
All our main results established in the sequel
(~\cref{eqn:spec_flat_simple,eqn:spec_flat_comp,eqn:nazarov_power}) show that
one can construct apertures that are guaranteed to be tight within a constant
factor of $t$. Under the above interpretation, what we
establish rigorously is that our results are tight to within a universal constant
number ($\approx 18.30$) of dB, regardless of the scene correlation structure given by $\vd$.
This factor may be read off from $2M(n)^2$ of Prop.~\ref{propn:nazarov_power}.

\section{Results}
\label{sec:results}

The goal of optimal aperture design (aka optimal $\va$) is to minimize the LMMSE
formula subject to the scene model, denoted as follows:
\[
    m^*(n,t,W,J,\vd) \triangleq \min_{\va} m(n,t,W,J,\vd,\va).
\]

Let us first understand why the minimization above is a challenging problem.
Consider the even simpler problem in which the optimal transmissivity, say $\rho_0$,
is given to us. Then, although $\va \in [0,1], \rho(\va) = \rho_0$ is a convex
constraint, the LMMSE~\eqref{eqn:lmmse} which we wish to minimize is neither
convex nor quasiconvex in $\va$, since $1/(1+cx^2)$ lacks
any of these behaviors.

In order to solve this problem, our general approach is as follows. First, we
use Parseval's identity that relates time and frequency space. Under a fixed power
budget, it is easy to solve for the optimal spectrum allocation $|\hat{\va}_i|^2$
by studying the well-behaved and convex $1/(1+cx)$ that has a solution given by waterfilling~\eqref{eqn:lmmse_lb}.
Next, we are faced with the ``coefficient problem'' of finding a $\va \in [0,1]$
with given spectrum allocation. To address this, we appropriately apply a
theorem of Nazarov~\cite[p. 5]{nazarov1997}. An exposition of Nazarov's work
together with the context he draws from (e.g., the geometric ideas of~\cite{bang1951},
along with the analytic ideas of~\cite{leeuw1977}) may be found in~\cite{ball2001}.

\subsection{Lower bound}

We first derive a lower bound for LMMSE~\eqref{eqn:lmmse} based on
waterfilling (see, e.g., \cite[Thm 19.7]{yury2018}). For notational ease, we let $\gamma = t/(n(W+J\rho))$
throughout.
\begin{propn}
\label{propn:lmmse_lb}
Let $\va$ satisfy $\rho(\va)=\rho$. Then:
\begin{equation}
\label{eqn:lmmse_lb}
m(n,t,W,J,\vd,\va) \ge \frac{1}{\frac{n}{\theta}+\gamma n^2\rho^2} +
\sum_{i=1}^{n-1} \frac{1}{\frac{1}{\vd_i}+\gamma P_i}.
\end{equation}
Here $P_i = (1/\gamma)(T-1/d_i)^+$ and total power $P = \sum_{i=1}^{n-1} P_i = n(\floor{n\rho}+(n\rho-\floor{n\rho})^2)-n^2\rho^2$.
Also note $P \le n^2\rho(1-\rho).$
We remark that~\eqref{eqn:lmmse_lb} is sharp if and only if $|\hat{\va}_i|^2 = P_i$ for $0 < i < n$.
\end{propn}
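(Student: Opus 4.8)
The plan is to split the sum in \eqref{eqn:lmmse} into the $i=0$ (DC) term and the $1\le i\le n-1$ (AC) terms, evaluate the DC term exactly, and lower-bound the AC terms by a power-constrained waterfilling argument. Writing $m = \tfrac{1}{1/\vd_0+\gamma|\hat\va_0|^2}+\sum_{i=1}^{n-1}\tfrac{1}{1/\vd_i+\gamma|\hat\va_i|^2}$ and using $\hat\va_0=\sum_i\va_i=n\rho$ together with $1/\vd_0=n/d(0)=n/\theta$, the DC term equals $\tfrac{1}{n/\theta+\gamma n^2\rho^2}$ for every $\va$ of transmissivity $\rho$; this is the first term of \eqref{eqn:lmmse_lb}.

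Next I would bound the total AC power. By Parseval's identity in the normalization used for $\hat\va$, $\sum_{i=0}^{n-1}|\hat\va_i|^2=n\|\va\|_2^2$, so $\sum_{i=1}^{n-1}|\hat\va_i|^2=n\|\va\|_2^2-n^2\rho^2$. Since $\|\va\|_2^2$ is convex and the feasible region $\{\va\in[0,1]^n:\sum_i\va_i=n\rho\}$ is a polytope, the maximum of $\|\va\|_2^2$ is attained at a vertex; a vertex of this set has all coordinates in $\{0,1\}$ except at most one, so $\max\|\va\|_2^2=\floor{n\rho}+(n\rho-\floor{n\rho})^2$, giving $\sum_{i=1}^{n-1}|\hat\va_i|^2\le P$ with $P$ as stated. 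The cruder inequality $\va_i^2\le\va_i$ valid on $[0,1]$ gives $\|\va\|_2^2\le n\rho$ and hence $P\le n^2\rho(1-\rho)$.

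Finally I would invoke waterfilling on the AC terms. The function $(x_1,\dots,x_{n-1})\mapsto\sum_{i=1}^{n-1}\tfrac{1}{1/\vd_i+\gamma x_i}$ is convex and strictly decreasing in each coordinate, so over $\{x\ge 0:\sum_i x_i\le P\}$ its minimum occurs with $\sum_i x_i=P$, and the KKT conditions for that equality-constrained convex program are exactly the (reverse) waterfilling conditions, whose solution is $x_i=P_i=\frac{1}{\gamma}(T-1/\vd_i)^+$ with $T$ determined by $\sum_i x_i=P$ (cf.\ \cite[Thm 19.7]{yury2018}). Since $(|\hat\va_1|^2,\dots,|\hat\va_{n-1}|^2)$ is feasible for this program by the previous paragraph, the AC part of \eqref{eqn:lmmse} is at least $\sum_{i=1}^{n-1}\tfrac{1}{1/\vd_i+\gamma P_i}$; adding the DC term yields \eqref{eqn:lmmse_lb}. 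Because the DC term is already an equality and the waterfilling minimizer is unique, equality in \eqref{eqn:lmmse_lb} holds exactly when $|\hat\va_i|^2=P_i$ for $0<i<n$, as claimed.

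I expect the only mildly delicate step to be pinning down the exact maximal AC power $P$ through the vertex structure of the hypercube slice $\{\va\in[0,1]^n:\sum_i\va_i=n\rho\}$ (this sharper constant, rather than the clean $n^2\rho(1-\rho)$, is what upgrades the sharpness claim to an ``iff''); the DC evaluation and the waterfilling step are otherwise standard bookkeeping.
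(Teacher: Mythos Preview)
Your proposal is correct and follows essentially the same approach as the paper's proof: split off the DC term via $\hat{\va}_0=n\rho$, bound the remaining AC power by Parseval together with the fact that $\max\{\sum_i x_i^2:\ x_i\in[0,1],\ \sum_i x_i=r\}=\floor{r}+(r-\floor{r})^2$, then apply (reverse) waterfilling to $-1/(1+\gamma x)$, and finally obtain $P\le n^2\rho(1-\rho)$ from $x^2\le x$ on $[0,1]$. Your vertex-of-the-hypercube-slice justification for the maximal $\ell_2$ norm and your remark on how the exact $P$ (rather than the looser $n^2\rho(1-\rho)$) is what makes the ``iff'' sharpness statement consistent with Parseval are more explicit than the paper's sketch, but the argument is the same.
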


\begin{proof}
We have $\hat{\va}_0 = n\rho$, giving the first term. For the nonzero
frequencies, we use the fact that the maximum of $\sum_i x_i^2$ subjected to $x_i \in
[0,1]$ and $\sum x_i=r$ is $\floor{r}+(r-\floor{r})^2$.
This, together with Parseval's identity, yields an upper bound on the power of
the nonzero frequencies. Waterfilling, modified to study
$\frac{-1}{1+ax}$ as opposed to $\log(1+ax)$, then gives the proposition.
The floors are removed to get the $P$ upper bound by $x^2 \le x$ for $0 \le x \le 1$.
\end{proof}

Note that minimizing the right hand side over $\rho$ gives a \emph{lower bound}
on $m^*(n,t,W,J,\vd)$. This task is trivial numerically, but in general
difficult analytically. We denote this optimal $\rho$ by $\rho^*$ henceforth.
\subsection{Upper bound}

Our goal here has been set from~\eqref{eqn:lmmse_lb}. Conceptually, the design
issue is finding a $\va \in [0,1]$ with prescribed lower bounds $|\hat{\va}_i|^2 \ge P_i$.
In general, this is impossible to do, and thus our lower bound~\eqref{eqn:lmmse_lb} is not sharp in all settings.
However, it should be noted that sharp cases do exist. Perhaps the
conceptually simplest example is the analog of~\eqref{eqn:lmmse_lb} for a
lens, where our bound is sharp.

Our general approach is to simply step back by a factor $C$ and obtain a $\va
\in [0,1]$ with $|\hat{\va}_i|^2 \ge P_i/C$. What we do next is address how we
can guarantee such a $C$. We shall move from simpler to more complex
situations, and accordingly start off with i.i.d.\ scenes where for infinitely
many $n$ one does not need the full generality of Nazarov's solution.

\subsubsection{i.i.d.\ scenes}

Recalling that $d(x)=\theta$ is constant, the waterfilling asks
for a $0,1$ sequence with uniform spectrum allocation after the DC term (``spectrally flat
sequences''). As already noted in~\cite{fenimore1978,yedidia2018},
one can certainly construct such spectrally flat sequences for infinitely many
values of $n$, as long as they are ``unbiased'' with $\rho=1/2-o(1)$. This meets the lower bound (as $n
\rightarrow \infty$) as long as the optimal $\rho^*$ is $1/2-o(1)$ for the given $t,W,J$.
A natural question is how good is using an ``unbiased'' spectrally flat
sequence when $\rho^* \ne 1/2$? The answer is given in the following:
\begin{propn}
\label{propn:spec_flat_simple}
Let $\theta,W,J$ be fixed and let $\vd = (\theta/n)\vone$.
Then for infinitely many $n$, there exists a $\va \in [0,1]$ such that:
\begin{equation}
\label{eqn:spec_flat_simple}
m(n,2t,W,J,\vd,\va) \le m^*(n,t,W,J,\vd).
\end{equation}
\end{propn}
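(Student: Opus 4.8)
The plan is to exhibit a single $\va\in[0,1]^n$ (for suitable $n$) that beats the optimal LMMSE at half the exposure time, by using an unbiased spectrally flat sequence and carefully comparing the resulting LMMSE expression against the lower bound of Prop.~\ref{propn:lmmse_lb} evaluated at the optimal transmissivity $\rho^*$. The key observation is that doubling $t$ should compensate for two losses incurred by forcing $\rho=1/2$ when $\rho^*\neq 1/2$: first, the mismatch between the noise-scaling factor $\gamma = t/(n(W+J\rho))$ at $\rho=1/2$ versus $\rho^*$; and second, the mismatch between the achievable AC power $n^2\cdot\frac14$ (for a spectrally flat $\rho=1/2$ sequence, each $|\hat\va_i|^2 = n^2/4\cdot\frac{1}{?}$ — more precisely the total AC power is $\approx n^2/4$) versus the power bound $P\le n^2\rho^*(1-\rho^*)$ appearing in the lower bound.

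First I would recall that for $d(x)=\theta$ constant, the lower bound~\eqref{eqn:lmmse_lb} at transmissivity $\rho$ becomes (after waterfilling over the $n-1$ equal diagonal entries $\vd_i=\theta/n$) an explicit function of $\rho$; call its minimum over $\rho$ the quantity $m^*$ bounds from below, so it suffices to show the chosen $\va$ achieves something at most this minimum with $t\mapsto 2t$. Second, I would invoke the known existence (cited to~\cite{fenimore1978,yedidia2018}) of $0/1$ sequences of length $n$, for infinitely many $n$, with $\rho(\va)=1/2$ (or $1/2-o(1)$) and $|\hat\va_i|^2$ equal for all $0<i<n$ — e.g. Legendre-symbol / maximal-length sequences — so that by Parseval the common value is $|\hat\va_i|^2 = (n^2-n\cdot\frac14\cdot 4)/(n-1)\approx n/4\cdot\frac{n}{?}$; concretely $\sum_{i=1}^{n-1}|\hat\va_i|^2 = n\|\va\|_2^2 - |\hat\va_0|^2 = n\cdot\frac n2 - \frac{n^2}4 = \frac{n^2}4$ (up to $o(n^2)$), so each $|\hat\va_i|^2 \approx \frac{n^2}{4(n-1)}\approx \frac n4$. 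Wait — this needs care: with $\va\in\{0,1\}$ and $\rho=1/2$ we get $\|\va\|_2^2 = n\rho = n/2$, hence total spectral energy $n\cdot n/2 = n^2/2$, DC term $|\hat\va_0|^2 = (n\rho)^2 = n^2/4$, so AC energy is $n^2/4$, giving $|\hat\va_i|^2\approx n/4$ for each AC frequency. Plugging into~\eqref{eqn:lmmse}: each AC term is $1/(n/\theta + 2\gamma'\cdot n/4)$ where $\gamma' = t/(n(W+J/2))$, i.e. with the doubled exposure $2t$ we get $\gamma'' = 2t/(n(W+J/2))$, and the term is $1/(n/\theta + \gamma''n/4) = 1/(n/\theta + tn/(2n(W+J/2)))$.

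Third, and this is where the main work lies, I would compare this term-by-term (and the DC term separately) with the lower bound at $\rho^*$. The DC term of~\eqref{eqn:lmmse_lb} at $\rho^*$ is $1/(n/\theta + \gamma^* n^2(\rho^*)^2)$ and the AC part is a waterfilling sum with total power $\le n^2\rho^*(1-\rho^*)\le n^2/4$; since the chosen $\va$ distributes exactly $n^2/4$ AC power flatly, and waterfilling against flat $\vd_i$ also distributes flatly, the comparison reduces to checking that $2t/(W+J/2)$ dominates $t\cdot 4\rho^*(1-\rho^*)/(W+J\rho^*)$ — equivalently that $2/(W+J/2)\ge 4\rho^*(1-\rho^*)/(W+J\rho^*)$ for all $\rho^*\in[0,1]$, which holds because $4\rho^*(1-\rho^*)\le 1$ and $(W+J\rho^*)\le(W+J)\le 2(W+J/2)$ when... hmm, actually $(W+J\rho^*)\le W+J$ and we need $(W+J/2)\le$ something; since $W+J/2 \ge (W+J\rho^*)/2$ fails for small $\rho^*$... let me instead argue: $\frac{2}{W+J/2} - \frac{4\rho(1-\rho)}{W+J\rho} = \frac{2(W+J\rho) - 4\rho(1-\rho)(W+J/2)}{(W+J/2)(W+J\rho)}$; the numerator is $2W + 2J\rho - 4\rho(1-\rho)W - 2J\rho(1-\rho) = 2W(1-2\rho(1-\rho)) + 2J\rho(1-(1-\rho)) = 2W(1-2\rho+2\rho^2) + 2J\rho^2 \ge 0$ since $1-2\rho+2\rho^2 = (1-\rho)^2+\rho^2\ge 0$. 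So the inequality holds, the DC terms compare favorably (smaller $\gamma$-coefficient on a term that is decreasing, but the $\rho^*$ bound may have larger DC coefficient — need to check the DC term of our sequence is $\le$ DC of the bound, handled by the same numerator computation with $(\rho^*)^2$ in place of $\rho^*(1-\rho^*)$ giving an even easier inequality), and summing finishes the proof.

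The hard part will be handling the $o(1)$ slack cleanly: the spectrally flat sequences have $\rho = 1/2 - o(1)$ rather than exactly $1/2$, and the flatness is also only approximate for some constructions, so I would need to show these lower-order terms are absorbed into the constant-factor-$2$ headroom in exposure time (the numerator computation above has strict slack $2W((1-\rho)^2+\rho^2)+2J\rho^2 > 0$ away from degenerate $W=J=0$, giving room), and to be careful that "infinitely many $n$" is inherited from the cited constructions. A secondary subtlety is that $m^*$ is only lower-bounded by~\eqref{eqn:lmmse_lb}, not equal to it, but since we are proving $m(n,2t,\dots,\va)\le m^*(n,t,\dots)$ and we will in fact prove $m(n,2t,\dots,\va)\le \big[\text{lower bound at }\rho^*\big]\le m^*(n,t,\dots)$, this direction is exactly what we want.
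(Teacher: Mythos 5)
Your core argument is exactly the paper's: the paper compresses the whole comparison into the single-variable quantity $f_a(\rho)=\rho(1-\rho)/(a+\rho)$ with $a=W/J$ and checks that $\sup_x f_a(x)\le 2f_a(1/2)$ for all $a\ge 0$, which is precisely your numerator computation $2W\bigl((1-\rho)^2+\rho^2\bigr)+2J\rho^2\ge 0$; the Parseval/waterfilling bookkeeping you spell out is what the paper leaves implicit. So the approach matches and the AC-term comparison is right. The one step that is actually false as stated is your parenthetical claim that the DC comparison is ``an even easier inequality'': substituting $(\rho^*)^2$ for $\rho^*(1-\rho^*)$ requires $W\bigl(1-2(\rho^*)^2\bigr)+J\rho^*(1-\rho^*)\ge 0$, which fails for $\rho^*>1/\sqrt{2}$ (at $\rho^*=1$ it reads $-W\ge 0$), because $(\rho^*)^2$ is not bounded by $1/4$ the way $\rho^*(1-\rho^*)$ is. This is repairable rather than fatal: the DC term is a single $O(\theta/n)$ summand in an LMMSE of order $\Theta(1)$, and for large $n$ the minimizer $\rho^*$ of the right-hand side of \eqref{eqn:lmmse_lb} is driven by the $n-1$ AC terms toward $\arg\max f_a\in(0,1/2]$, where your inequality does hold; but you should argue one of these points explicitly rather than assert the term-by-term DC bound for all $\rho^*\in[0,1]$. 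Your closing remarks about the $\rho=1/2-o(1)$ slack and about only needing the lower bound (not equality) for $m^*$ are both correct and correctly oriented.
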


In other words, ``unbiased'' spectrally flat sequences are always guaranteed
to achieve optimal LMMSE at the expense of increasing the exposure time $t$ by
a factor of at most $2$. In the sequel, we show how one can reduce this
factor even further.

This is achieved by using spectrally flat sequences with $\rho=1/8-o(1)$ and
$\rho=1/4-o(1)$, and allows us to refine $2$ to $8/7$. The construction of these is
based on well-established cyclotomic number computations~\cite[Art. 356]{gauss1801} in number theory:
$1/4$ corresponds to quartic residues~\cite{chowla1944}, and
$1/8$ corresponds to octic residues~\cite{lehmer1953}. It should be emphasized,
however, that in contrast to the case in which $\rho=1/2$, the
existence of such sequences for infinitely many values of $n$ is not
guaranteed, because no single-variable quadratic taking on infinitely
many primes is known~\cite{mo2018}.

Of perhaps greater importance is the fact that the octic residue constructions
of~\cite{lehmer1953} rely upon primes that come from a second order linear
recurrence with rather large coefficients, arising as the solutions of
Brahmagupta-Pell equations. There is thus a paucity of such constructions,
indeed~\cite{lehmer1953} gives only two such $n$ below $10^9$, namely
$n=73$ and $n=26041$. On the other hand, the quartic residue constructions are
reasonably numerous, with over $150$ of them available below $10^7$. Even
restricting ourselves to the quartic residues allows us to tighten from $2$ to
$4/3$. Summarizing all of the above, we have:
\begin{propn}
\label{propn:spec_flat_comp}
Let $\theta,W,J$ be fixed and let $\vd = (\theta/n)\vone$.
Then for some values of $n$ that exist even beyond, e.g., $10^9$, there exists a
$\va \in [0,1]$ such that:
\begin{subequations}
\label{eqn:spec_flat_comp}
\begin{equation}
\label{eqn:spec_flat_comp_v1}
m(n,(8/7)t,W,J,\vd,\va) \le m^*(n,t,W,J,\vd).
\end{equation}
Moreover, for many ($> 150$ for $n < 10^7$) values of $n$ that exist even
beyond, e.g., $10^9$, there exists a $\va \in [0,1]$ such that:
\begin{equation}
\label{eqn:spec_flat_comp_v2}
m(n,(4/3)t,W,J,\vd,\va) \le m^*(n,t,W,J,\vd).
\end{equation}
\end{subequations}
\end{propn}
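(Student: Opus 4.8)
The plan is to mirror the argument behind Proposition~\ref{propn:spec_flat_simple}, but to exploit spectrally flat sequences of transmissivity $\rho \approx 1/4$ and $\rho \approx 1/8$ rather than just $\rho \approx 1/2$, so as to shrink the exposure-time penalty. First I would recall the structure of the lower bound~\eqref{eqn:lmmse_lb}: for i.i.d.\ scenes ($\vd = (\theta/n)\vone$) the sum over nonzero frequencies has all $d_i$ equal, so waterfilling allocates power \emph{uniformly} across the $n-1$ nonzero frequencies, with total power $P \le n^2 \rho^*(1-\rho^*)$ where $\rho^*$ is the optimal transmissivity for the given $t,W,J$. Thus the lower bound at exposure time $t$ is governed by the quantity $\gamma \cdot n^2 \rho^*(1-\rho^*) = \tfrac{t}{n(W+J\rho^*)} n^2 \rho^*(1-\rho^*)$, i.e.\ effectively by $t \cdot \tfrac{\rho^*(1-\rho^*)}{W+J\rho^*}$ per nonzero frequency (and similarly a DC term $\propto t\,\rho^{*2}/(W+J\rho^*)$).

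Next I would observe that a spectrally flat sequence of transmissivity $\rho \in \{1/8,1/4\}$ — which exists for the values of $n$ claimed, by the cyclotomic constructions of~\cite{lehmer1953,chowla1944} (octic/quartic residues) — has $|\hat\va_0|^2 = n^2\rho^2$ and $|\hat\va_i|^2$ equal and summing to $n^2\rho(1-\rho)$ over the nonzero frequencies, so it exactly achieves its own lower bound~\eqref{eqn:lmmse_lb} with that $\rho$. Plugging such a $\va$ into $m(n, ct, W, J, \vd, \cdot)$ and comparing term-by-term against the lower bound for $m^*(n,t,W,J,\vd)$, the claim $m(n,ct,\ldots,\va) \le m^*(n,t,\ldots)$ reduces to the scalar inequality: for every admissible $\rho^*$ (in particular $\rho^* \in (0,1)$, and really $\rho^* \le 1/2$ by the symmetry $\rho \leftrightarrow 1-\rho$ which only helps the noise term),
\[
c\cdot \frac{\rho(1-\rho)}{W+J\rho} \;\ge\; \frac{\rho^*(1-\rho^*)}{W+J\rho^*}
\qquad\text{and}\qquad
c\cdot \frac{\rho^2}{W+J\rho} \;\ge\; \frac{\rho^{*2}}{W+J\rho^*},
\]
for some fixed $\rho$ available to us, since each summand $1/(\tfrac{1}{d_i} + \gamma P_i)$ is monotone decreasing in the allocated power. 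Because $J \ge 0$ and $\rho < \rho^*$ is possible, the worst case of the ratio $\bigl(\tfrac{\rho^*(1-\rho^*)}{W+J\rho^*}\bigr)\big/\bigl(\tfrac{\rho(1-\rho)}{W+J\rho}\bigr)$ over $\rho^* \in (0,1]$ is attained at an extreme configuration (either $W=0$, shot-noise dominated, or $J=0$, thermal-noise dominated), and a short optimization over $\rho^* \in (0,1]$ with the best choice $\rho \in \{1/4,1/8\}$ pins down the constant: $c = 4/3$ suffices using $\rho = 1/4$, and $c = 8/7$ suffices once $\rho = 1/8$ is also in play (one picks whichever of $1/8, 1/4$ is closer to the regime of $\rho^*$). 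I would also check the DC term does not dominate — it is strictly easier since $\rho^2/(W+J\rho)$ is increasing in $\rho$ and the relevant $\rho^* \le 1/2$ range keeps the DC contribution small relative to the $n-1$ spread-spectrum terms as $n \to \infty$; alternatively one absorbs it into the same constant.

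Finally I would address the $o(1)$ slack: the cyclotomic constructions give transmissivity $\rho = 1/4 + O(1/\sqrt n)$ (resp.\ $1/8 + O(1/\sqrt n)$) and spectral flatness up to a single exceptional frequency contributing $O(1/n)$ relative error, so passing to the limit $n \to \infty$ along the admissible sequence of $n$'s recovers the clean constants $4/3$ and $8/7$; the existence of infinitely many (indeed $>150$ below $10^7$, and some beyond $10^9$) such $n$ is exactly the content of the number-theoretic references and is quoted, not reproved. The main obstacle is the scalar optimization establishing that $4/3$ (resp.\ $8/7$) is the right constant uniformly over all noise regimes $(W,J)$ and all optimal $\rho^*$ — in particular verifying that no intermediate value of $\rho^*$, nor the interplay between the DC and non-DC terms, forces a larger factor than the endpoint analysis suggests. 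Everything else is bookkeeping against Proposition~\ref{propn:lmmse_lb}.
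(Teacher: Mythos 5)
Your proposal is correct and follows essentially the same route as the paper: invoke the quartic/octic residue difference sets as spectrally flat sequences, and reduce the exposure-time factor to the single-variable optimization of the loss ratio $\sup_x f_a(x)/f_a(\rho)$ with $f_a(\rho)=\rho(1-\rho)/(a+\rho)$ and $a=W/J$, taken over the available transmissivities $\rho\in\{1/8,1/4,1/2\}$ (resp.\ $\{1/4,1/2\}$) and worst-cased over $a\ge 0$, which yields $8/7$ (resp.\ $4/3$). Your additional care with the asymptotically negligible DC term and the $o(1)$ slack in the cyclotomic constructions only makes explicit bookkeeping that the paper's terser proof leaves implicit.
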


\begin{proof}[Proof of Props.~\ref{propn:spec_flat_simple},~\ref{propn:spec_flat_comp}]
The ``difference sets'' of~\cite{chowla1944,lehmer1953} are in our language spectrally flat sequences.
The constant factor is given by the following single variable optimization.
In view of~\eqref{eqn:lmmse_lb}, let $f_a(\rho) = (\rho(1-\rho))/(a+\rho)$ defined on $[0,1]$; $a$ corresponds
to $W/J$. The numerator comes from the power bound, the denominator from the noise penalty.
Then, $M(a,\rho)=(\sup_x f_a(x))/(f_a(\rho))$ is the multiplicative loss factor for a fixed
$W/J$ and fixed $\rho \in \{0.125,0.25,0.5\}$. One may then optimize over
$\rho,a$ to get the constant~\eqref{eqn:spec_flat_comp_v1}. This proof, modified to $\rho \in \{0.25,0.5\}$
and $\rho \in \{0.5\}$, also yields~\eqref{eqn:spec_flat_comp_v2}
and~\eqref{eqn:spec_flat_simple} respectively. The fact that there are
infinitely many $n$ for $\rho=0.5-o(1)$ follows from the quadratic residue
construction together with the well known fact that there are
infinitely many primes $p=4k+3$ (see, e.g., \cite[Chap 7]{apostol2013}).
\end{proof}

\subsubsection{Correlated scenes}

We now turn to correlated scenes. Here the waterfilling is nontrivial, and
asks for an unequal spectrum allocation. We therefore invoke Nazarov's
solution to the coefficient problem~\cite[p. 5]{nazarov1997}, and
provide a statement here specialized to the DFT and $l_\infty$ that we use.

First, some notation. Let us define inner products with respect to the uniform probability
distribution on $\{0,1,\dots,n-1\}$. Let $0 \le i,j \le n-1$, and let $\psi_j$ be a orthonormal basis for the DFT
on real sequences. Explicitly, let $h=\ceil{(n-1)/2}$. Let $\psi_0(i) = 1$,
$\psi_j(i) = \sqrt{2}\cos(\omega ji)$ for $0 < j < h$,
$\psi_j(i) = \sqrt{2}\sin(\omega ji)$ for $h < j < n$.
If $n$ is even, let $\psi_h(i) = \cos(\omega hi)$, otherwise $\psi_h(i) =
\sqrt{2}\cos(\omega hi)$. Finally, let $\beta(n) = \min_{j} |\psi_j|_1$.
\begin{thm}[Nazarov]
\label{thm:nazarov}
Let $M(n) = ((3\pi)/2)\beta(n)^{-2}$. Let $0 \le p_0,p_1,\dots,p_{n-1}$
be such that $\sum p_j = 1$. Then there exists a $\vb \in [-M(n),M(n)]$ with
$|(\vb,\psi_j)|^2 \ge p_j$ for all $0 \le j \le n-1$.
\end{thm}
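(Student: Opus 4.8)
The plan is to deduce the statement from Nazarov's solution of the $L^\infty$ coefficient problem for exponential sums \cite[p.~5]{nazarov1997} --- the one substantive ingredient --- after a routine change of basis and some bookkeeping of constants; see \cite{ball2001} for an exposition of Nazarov's argument and its origins in \cite{bang1951,leeuw1977}.

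First I would recast the set-up in Fourier language. View $\{0,\dots,n-1\}$ as the cyclic group with the uniform probability measure, so that its $n$ characters form an orthonormal basis of the complex-valued sequences, and observe that the real system $\{\psi_j\}$ is an orthogonal reshuffling of those characters: the non-constant, non-Nyquist $\psi_j$ are the $\cos$/$\sin$ parts of conjugate pairs of characters, up to the fixed factors $\sqrt2$, while $\psi_0$ and (for $n$ even) $\psi_h$ are single characters. Consequently, for a real sequence $\vb$ the coordinates $(\vb,\psi_j)$ are, up to those factors, the real and imaginary parts of the DFT of $\vb$, and the requirement $|(\vb,\psi_j)|^2\ge p_j$ is a prescription on the magnitudes of the Fourier coefficients of $\vb$. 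Coordinates with $p_j=0$ impose no constraint, so I may assume every $p_j>0$.

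Next I would invoke Nazarov's theorem and track the constant. Nazarov's result says exactly that one can realize prescribed Fourier-coefficient magnitudes by a sequence whose $L^\infty$ norm is at most a universal constant times a purely system-dependent amplification factor; the universal constant is $3\pi/2$, and for the DFT system the amplification specializes to $\beta(n)^{-2}$, where $\beta(n)=\min_j |\psi_j|_1$ is the least $L^1$ mass of a basis element --- the quantity measuring how badly a single coefficient can resist being reproduced by a bounded sequence. Feeding Nazarov the targets $\sqrt{p_j}$ (with, say, all $+$ signs) returns a $\vb$ with $(\vb,\psi_j)^2\ge p_j$ for every $j$ and with $\vb\in[-M(n),M(n)]$ coordinatewise, which is the claim. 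Note that the ``large coefficients'' half of the conclusion is automatic: any sign pattern already gives $(\vb,\psi_j)=\pm\sqrt{p_j}$ by orthonormality, so all the work --- and all the subtlety in the constant --- lies in the $L^\infty$ bound, and crucially in the fact that it is \emph{uniform} over $(p_j)$ ranging in the simplex.

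The main obstacle is therefore Nazarov's theorem itself. A soft approach --- choosing $\pm1$ signs greedily so as to keep an $\ell_2$ or a $\cosh$-type potential small, in the style of Bang's plank lemma \cite{bang1951} (this is also what underlies the greedy hypercube algorithm described later in the paper) --- only delivers a sup-norm bound of order $\sqrt{\log n}$, so the genuinely $n$-independent bound $M(n)$ forces one to exploit the oscillation of the characters through $L^1$ Remez/Tur\'an-type estimates for exponential polynomials in the spirit of \cite{leeuw1977}; carrying out that estimate is the real content. A minor, entirely routine point is to propagate the $\sqrt2$ normalizations through the change of basis so that the constant emerges as exactly $(3\pi/2)\beta(n)^{-2}$ rather than off by a bounded factor.
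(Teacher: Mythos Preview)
Your plan is aligned with what the paper does: the paper does not prove this theorem at all but simply states it as the $l_\infty$/DFT specialization of Nazarov's coefficient theorem \cite[p.~5]{nazarov1997}, so ``invoke Nazarov as a black box, rewrite the real $\psi_j$ system in terms of characters, and read off the constant $(3\pi/2)\beta(n)^{-2}$'' is exactly the intended route.

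One remark on your final paragraph, though, since it touches the paper's own content. You describe the hypercube greedy algorithm of Sec.~\ref{sec:greedy} as a Bang-style $\ell_2$/$\cosh$ potential method that can only reach a $\sqrt{\log n}$ sup-norm bound, with the $n$-free constant coming instead from Remez/Tur\'an-type oscillation estimates. That is not how the paper (or Nazarov) proceeds. Nazarov's proof \emph{is} a sign-cort\`ege argument: one optimizes a specific potential over $\{\pm1\}^n$, and the analytic input (the $L^1$ lower bound encoded in $\beta(n)$) enters in the analysis of that potential, not as a separate route. The paper's observation in Sec.~\ref{sec:greedy} is precisely that a \emph{local} optimum of Nazarov's potential already certifies the full bound $M(n)$, which is why the greedy flip algorithm is guaranteed to output a valid $\vb$. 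So the greedy method is not a weak surrogate that falls short by a $\sqrt{\log n}$ factor; it is a constructive realization of Nazarov's existence proof. Your dichotomy between ``soft Bang-type'' and ``hard oscillation estimates'' misplaces where the $L^1$ information is used.
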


With~\ref{thm:nazarov} in hand, we are able to reach a far more general version of
Prop.~\ref{propn:spec_flat_simple},~\ref{propn:spec_flat_comp} valid
for any $n$ and any scene prior $\vd$. Also, in Sec.~\ref{sec:greedy} we show
how to actually construct such tight sequences whose existence is guaranteed
by~\ref{thm:nazarov}.
\begin{propn}
\label{propn:nazarov_power}
For all $n,t,W,J,\vd$, there exists a $\va \in [0,1]$ such that:
\begin{equation}
\label{eqn:nazarov_power}
m(n,2M(n)^2t,W,J,\vd,\va) \le m^*(n,t,W,J,\vd).
\end{equation}
Furthermore, we have:
\begin{equation}
\label{eqn:dft_l1_est}
M(n) \in [(3\pi^3)/16+o(1), 3\pi+o(1)].
\end{equation}
\end{propn}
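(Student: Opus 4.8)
The plan is to combine the waterfilling lower bound of Prop.~\ref{propn:lmmse_lb} with Nazarov's Theorem~\ref{thm:nazarov} applied to the DFT basis $\{\psi_j\}$, losing exactly a factor of $2M(n)^2$ in the exposure time. First I would fix $n,t,W,J,\vd$ and let $\rho^*$ be the minimizer of the right-hand side of~\eqref{eqn:lmmse_lb}, so that $m^*(n,t,W,J,\vd)$ equals that optimized bound (or is lower-bounded by it, which is all we need). The waterfilling solution prescribes nonnegative target powers $P_i$, $0<i<n$, with $\sum_i P_i = P \le n^2\rho^*(1-\rho^*)$, plus the DC power $n^2(\rho^*)^2$. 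The task is to realize a $\va\in[0,1]$ whose DFT power spectrum dominates these targets; then plugging into~\eqref{eqn:lmmse} and comparing term by term (each summand $1/(1/\vd_i + \gamma'|\hat\va_i|^2)$ is decreasing in $|\hat\va_i|^2$ and in $\gamma'$) gives~\eqref{eqn:nazarov_power}.

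Next I would set up the application of Nazarov. Write $\va = \tfrac12(\vone + \vb)$ for $\vb\in[-1,1]^n$, so the entrywise constraint $\va\in[0,1]$ becomes $\|\vb\|_\infty \le 1$, and for $j\ne 0$ the DFT magnitudes satisfy $|\hat\va_j|^2 = \tfrac14 |\widehat{\vb}_j|^2$ — i.e.\ the constant shift only affects DC. Converting to the real orthonormal basis $\psi_j$ of the theorem, the coefficients $(\vb,\psi_j)$ capture the same spectral energy up to the fixed constants in the definition of $\psi_j$; I would normalize the desired $P_i$ into a probability vector $p_j \propto P_i$ (folding conjugate pairs appropriately and setting $p_0=0$) and invoke Theorem~\ref{thm:nazarov} to get $\vb'\in[-M(n),M(n)]^n$ with $|(\vb',\psi_j)|^2 \ge p_j$. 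Rescaling $\vb = \vb'/M(n) \in [-1,1]^n$ then costs a factor $M(n)^2$ in spectral power; the shift-by-$\tfrac12$ and $\vone$-recombination costs the additional factor of $4$ in $|\hat\va_j|^2$ relative to $|\widehat{\vb}_j|^2$, but we may also add back the DC component freely since $\va = \tfrac12\vone + \tfrac12\vb$ already carries $\rho = 1/2$, and adjusting toward $\rho^*$ is a convexity/scaling argument. Tracking these constants, the net power deficit is $2M(n)^2$, which is absorbed by replacing $t$ with $2M(n)^2 t$ in $\gamma$. I expect the bookkeeping of exactly which constants ($4$ from the shift, $M(n)^2$ from rescaling, a factor $2$ from the $P\le n^2\rho(1-\rho)$ slack versus the DC handling) combine into precisely $2M(n)^2$ to be the fussy part, and the place where an honest proof must be careful rather than clever.

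For the estimate~\eqref{eqn:dft_l1_est} on $M(n) = \tfrac{3\pi}{2}\beta(n)^{-2}$, it suffices to estimate $\beta(n) = \min_j |\psi_j|_1$, the minimal $\ell_1$ norm (under the uniform measure on $\{0,\dots,n-1\}$) over the trigonometric basis functions. The DC function $\psi_0\equiv 1$ has $|\psi_0|_1 = 1$. For the others, $|\psi_j|_1 = \sqrt2\cdot\frac1n\sum_{i=0}^{n-1}|\cos(\omega j i)|$ (or $\sin$), and as $n\to\infty$ these Riemann sums converge to $\sqrt2\cdot\frac1{2\pi}\int_0^{2\pi}|\cos x|\,dx = \sqrt2\cdot\frac2\pi = \frac{2\sqrt2}{\pi}$ for generic $j$, while the extreme cases (e.g.\ $j$ near $h$, where the argument steps by nearly $\pi$) can make the average as large as $1$ but not smaller than roughly $\tfrac{2\sqrt2}{\pi}$; so $\beta(n) \to \min\{1, \tfrac{2\sqrt2}{\pi}\} = \tfrac{2\sqrt2}{\pi}$ (note $\tfrac{2\sqrt2}{\pi}\approx 0.900 < 1$). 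Then $\beta(n)^{-2} \to \pi^2/8$, giving the upper end $M(n)\to \tfrac{3\pi}{2}\cdot\tfrac{\pi^2}{8} = \tfrac{3\pi^3}{16}$; wait — this is the \emph{lower} end of~\eqref{eqn:dft_l1_est}, so the upper end $3\pi$ must come from the worst $j$ at finite $n$, where $|\psi_j|_1$ can dip as low as $\sqrt2\cdot\tfrac1n\sum|\cos|\to$ its minimum; I would pin down that the smallest $|\psi_j|_1$ over all $j$ and all $n$ (after the $o(1)$) corresponds to $\beta = \sqrt{1/\pi}$-type behavior giving $\beta^{-2}=2$, hence $M(n)\le 3\pi$. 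The main obstacle throughout is thus the careful finite-$n$ analysis of $\min_j \tfrac1n\sum_i |\cos(2\pi j i/n)|$ — establishing matching upper and lower bounds on this discrete average uniformly in the relevant range of $j$ — since everything else reduces to Parseval and term-by-term monotonicity.
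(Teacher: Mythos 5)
Your first two paragraphs follow essentially the same route as the paper: apply Theorem~\ref{thm:nazarov} to the normalized waterfilling powers $p_j\propto P_j$ with $p_0=0$, rescale and shift the resulting $\vb$ into $[0,1]$, and absorb the spectral deficit into the exposure time. Two bookkeeping points where the paper is more precise than you are: (i) the paper fixes the sign of $\vb$ so that $(\vb,\psi_0)\le 0$, which makes $\rho(\va)\le 1/2$ after the affine map $\va=(\vb+M(n))/(2M(n))$ --- this is what lets one say the shot-noise denominator $W+J\rho(\va)$ only improves; and (ii) the residual factor of $2$ is \emph{not} really ``the $P\le n^2\rho(1-\rho)$ slack versus the DC handling'' --- it is the transmissivity-mismatch penalty $\sup_x f_a(x)/f_a(1/2)\le 2$ with $f_a(\rho)=\rho(1-\rho)/(a+\rho)$, i.e.\ exactly the factor already isolated in the proof of Prop.~\ref{propn:spec_flat_simple}, which the paper reuses verbatim. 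Your ``convexity/scaling argument'' for adjusting toward $\rho^*$ is where this must be made honest; as written it would not obviously close.

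The genuine gap is in your treatment of~\eqref{eqn:dft_l1_est}. You first assert that $\min_j|\psi_j|_1$ is ``not smaller than roughly $2\sqrt{2}/\pi$,'' concluding $\beta(n)\to 2\sqrt{2}/\pi$ for all $n$; this is false and would collapse the interval in~\eqref{eqn:dft_l1_est} to a single point. The convergence of the Riemann sum $\frac{1}{n}\sum_i|\cos(2\pi ji/n)|$ to $2/\pi$ holds when $ij \bmod n$ equidistributes (e.g.\ $n$ prime, any $j\ne 0$), but for composite $n$ the sampled points can concentrate on a coarse sublattice. The extremal case is $4\mid n$ and $j=n/4$: then $\cos(\pi i/2)$ cycles through $1,0,-1,0$, the average of $|\cos|$ is $1/2$, and $|\psi_j|_1=\sqrt{2}\cdot\tfrac12=1/\sqrt{2}$, which is the source of the lower endpoint $\beta(n)\ge 1/\sqrt{2}+o(1)$ and hence the upper endpoint $M(n)\le(3\pi/2)\cdot 2=3\pi+o(1)$. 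Your self-correction guesses ``$\beta=\sqrt{1/\pi}$-type behavior giving $\beta^{-2}=2$,'' but $(1/\sqrt{\pi})^{-2}=\pi\ne 2$; the correct extremal value is $\beta=1/\sqrt{2}$, and you never identify the divisor structure ($j$ sharing a large common factor with $n$) that produces it. Without locating and bounding these degenerate frequencies --- which is precisely the ``composite case'' the paper's Lemma~\ref{lem:beta_bound} addresses via Euler--Maclaurin with spline mollification --- the claimed range of $M(n)$ is not established.
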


The justification of the tightness of~\eqref{eqn:nazarov_power} lies in
establishing~\eqref{eqn:dft_l1_est}, which we do first.
The phenomenon is captured by the factorization of $n$, with the best, that is
the largest, $\beta$ occurring for $n$ prime, and the worst occuring for
$n$ divisible by $4$. We have the following Lemma which establishes~\eqref{eqn:dft_l1_est}:
\begin{lemma}
\label{lem:beta_bound}
$\beta(n) \in
\left[\frac{1}{\sqrt{2}}+o(1),\frac{2\sqrt{2}}{\pi}+o(1)\right]$ as
$n \rightarrow \infty$. Moreover, if we restrict to $n$ being prime, $\beta(n)
= \frac{2\sqrt{2}}{\pi}+o(1)$.
\end{lemma}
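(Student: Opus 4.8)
The plan is to estimate $\beta(n) = \min_j |\psi_j|_1$, where $|\psi_j|_1 = \tfrac1n\sum_{i=0}^{n-1}|\psi_j(i)|$ is the average absolute value of a (normalized) cosine or sine sampled at the $n$ points $i = 0,\dots,n-1$. The starting observation is that for a ``generic'' frequency $j$, the sample points $\omega j i \bmod 2\pi$ equidistribute, so $|\psi_j|_1 \to \sqrt{2}\cdot\tfrac1{2\pi}\int_0^{2\pi}|\cos x|\,dx = \sqrt2\cdot\tfrac2\pi = \tfrac{2\sqrt2}\pi$; this should be the value for the bulk of frequencies, and in particular the \emph{largest} that $\min_j$ can be. The subtlety, and the source of the lower bound $\tfrac1{\sqrt2}$, is that certain frequencies $j$ are \emph{resonant} with $n$: if $d = \gcd(j,n)$ is large, the points $\omega j i = 2\pi j i/n$ only hit the $n/d$ distinct values $\{2\pi k d/n\}$ and equidistribution fails; the $\ell_1$ norm can then drop below $\tfrac{2\sqrt2}\pi$. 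The worst case should be $4 \mid n$ with $j$ chosen so that $\omega j i$ takes only the values $\{0,\pi/2,\pi,3\pi/2\}$ — then $\sqrt2\cos$ takes values $\{\sqrt2,0,-\sqrt2,0\}$ with average absolute value $\tfrac1{\sqrt2}$ (and similarly a sine). So the plan decomposes as: (i) show $\beta(n) \le \tfrac{2\sqrt2}\pi + o(1)$ always, with equality up to $o(1)$ when $n$ is prime; (ii) show $\beta(n) \ge \tfrac1{\sqrt2} + o(1)$ always; (iii) exhibit $n$ (namely $4\mid n$) where $\beta(n) = \tfrac1{\sqrt2}+o(1)$, pinning down the lower endpoint.

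For (i), the upper bound: pick any frequency $j$ with $\gcd(j,n)=1$ (e.g. $j=1$), so that $\{ji \bmod n : i\}$ is all of $\{0,\dots,n-1\}$, hence $|\psi_1|_1 = \tfrac{\sqrt2}{n}\sum_{i=0}^{n-1}|\cos(2\pi i/n)|$, which is a Riemann sum for $\sqrt2\cdot\tfrac1{2\pi}\int_0^{2\pi}|\cos|  = \tfrac{2\sqrt2}{\pi}$, with error $O(1/n)$ since $|\cos|$ has bounded variation. Thus $\beta(n) \le |\psi_1|_1 = \tfrac{2\sqrt2}\pi + O(1/n)$. For the ``prime'' refinement, when $n=p$ is prime every nonzero frequency $j$ has $\gcd(j,p)=1$, so the same Riemann-sum argument applies uniformly to \emph{all} $\psi_j$ (with $x\mapsto |\cos x|$ replaced by a shifted copy or $|\sin x|$, which has the same integral), giving $|\psi_j|_1 = \tfrac{2\sqrt2}\pi + O(1/p)$ for every $j\ge1$; the $j=0$ term is $|\psi_0|_1 = 1 > \tfrac{2\sqrt2}\pi$, so the minimum is $\tfrac{2\sqrt2}\pi + o(1)$ as claimed.

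For (ii), the universal lower bound, I would argue that for \emph{every} $j$, $|\psi_j|_1 \ge \tfrac1{\sqrt2} - o(1)$. Write $d=\gcd(j,n)$, $m=n/d$; then the samples $\psi_j(i)$ cycle through $\sqrt2\cos(2\pi k/m + \phi)$, $k=0,\dots,m-1$ (each value hit $d$ times), where $\phi$ is $0$ or a phase for the sine case. So $|\psi_j|_1 = \sqrt2\cdot\tfrac1m\sum_{k=0}^{m-1}|\cos(2\pi k/m + \phi)|$. For $m\ge 3$ this average is a Riemann-type sum that stays close to (in fact is $\ge$, by a short convexity/midpoint argument) $\tfrac2\pi$ when $m$ is moderately large, giving $|\psi_j|_1 \gtrsim \tfrac{2\sqrt2}\pi > \tfrac1{\sqrt2}$; and for the finitely many small $m$ one checks directly that $\tfrac1m\sum_k|\cos(2\pi k/m+\phi)| \ge \tfrac1{\sqrt2}$, the minimum being attained at $m=4,\phi=0$ (value $\tfrac12$, so $|\psi_j|_1 = \tfrac{\sqrt2}{2} = \tfrac1{\sqrt2}$) — one also has to handle $m=1,2$ but those correspond to $\psi_j$ being (up to sign) the all-ones or alternating $\pm1$ sequence with $\ell_1$ norm $1$, and the $n$-even endpoint basis vector $\psi_h$ whose entries are $\pm1$, again norm $1$. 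The combinatorial case check over small $m$ (and over the parity/phase bookkeeping in the definition of $\psi_j$, including the special $\psi_h$ when $n$ is even) is the main obstacle: it is routine but fiddly, and one must make sure the genuinely worst configuration $\{0,\tfrac\pi2,\pi,\tfrac{3\pi}2\}$ is not beaten by some other small-$m$, nonzero-phase combination. Finally (iii) is immediate: take $n=4n'$ and $j=n'$, so $m=4$, $\phi=0$, giving $|\psi_j|_1 = \tfrac1{\sqrt2}$ exactly (the $o(1)$ absorbs the difference between this exact value and $\beta(n)$, since no other frequency can be smaller by (ii)); hence $\beta(n)=\tfrac1{\sqrt2}+o(1)$ along $4\mid n$, and combining with (i)--(ii) yields the stated interval and the prime refinement. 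Plugging $\beta \in [\tfrac1{\sqrt2}+o(1),\tfrac{2\sqrt2}\pi+o(1)]$ into $M(n) = \tfrac{3\pi}2\beta(n)^{-2}$ then gives $M(n)\in[\tfrac{3\pi}{2}\cdot\tfrac{\pi^2}{8}+o(1),\tfrac{3\pi}2\cdot 2+o(1)] = [\tfrac{3\pi^3}{16}+o(1),3\pi+o(1)]$, which is exactly \eqref{eqn:dft_l1_est}.
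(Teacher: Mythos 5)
Your proposal is correct, and for the one part of the lemma the paper does not fully prove it takes a genuinely cleaner route. The upper bound and the prime case coincide with the paper's argument: equidistribution of $ji \bmod n$ for $\gcd(j,n)=1$ makes $|\psi_j|_1$ a Riemann sum for $\sqrt{2}\int_0^1 |\cos(2\pi x)|\,dx = 2\sqrt{2}/\pi$ with $O(1/n)$ error since $|\cos|$ has bounded variation. For composite $n$, the paper only sketches an Euler--Maclaurin argument with mollifiers/splines to tame the kinks of $|\cos|$; your reduction is more elementary and in fact sharper. Writing $m = n/\gcd(j,n)$, each $|\psi_j|_1$ is \emph{exactly} $\sqrt{2}$ times the average of $|\cos|$ or $|\sin|$ over the $m$ equally spaced points $2\pi k/m$, so the whole lower bound reduces to showing $\frac{1}{m}\sum_{k}|\cos(2\pi k/m+\phi)| \ge \frac12$ for the two phases that occur. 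The finite check you defer as ``fiddly'' does close cleanly: by the Fourier expansion $|\cos x| = \frac{2}{\pi} + \frac{4}{\pi}\sum_{l\ge 1}\frac{(-1)^{l+1}}{4l^2-1}\cos(2lx)$, only harmonics with $m \mid 2l$ survive the averaging, so the average equals $\frac{2}{\pi} + O(1/m^2)$ uniformly in the phase, which exceeds $\frac12$ for $m \ge 5$; at $m=3$ the cosine and sine averages are $2/3$ and $1/\sqrt{3}$, at $m=4$ both equal $1/2$, and $m=2$ produces only the $\pm1$ vector $\psi_h$ of norm $1$ (the frequency-$n/2$ sine is identically zero and is correctly absent from the basis). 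One parenthetical claim is slightly off: the average is not always $\ge 2/\pi$ (the sine average at odd $m$ sits just below it, e.g.\ $1/\sqrt{3} < 2/\pi$ at $m=3$), but ``within $O(1/m^2)$ of $2/\pi$'' is all you need and is true. A bonus of your route is that it gives the exact inequality $\beta(n) \ge 1/\sqrt{2}$ for every $n$, with equality whenever $4 \mid n$, which is stronger than the asymptotic statement and matches the paper's remark that the worst case is $4 \mid n$ and the best is $n$ prime.
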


\begin{proof}[Proof sketch]
We give a full proof for the $n=p$ prime case. Then, for
any $j \ne 0$, $ij$ sweeps over $\{0,1,\dots,p-1\}$, modulo $p$. Thus, really one is
looking at a Riemann sum approximation to $\int_0^1|\cos(2\pi x)| dx = 2/\pi$. The $l_2$ norm of
$\cos(2\pi x)$ on $[0,1]$ is $1/\sqrt{2}$, completing the prime case.
The composite case is more involved, as it needs to take into account the divisor
structure of $n$, which prevents such symmetry of the cosine vectors. Once accounted
for, the natural idea is to use Euler-Maclaurin summation, with standard
modifications by, e.g., mollifiers
to take into account the lack of smoothness of $|\cos(x)|$ at its zeros.
However, the mechanics are perhaps simplest in our specific setting when one uses short quadratic splines around
the zeros to get a $C^1$ approximation of any desired accuracy to $|\cos(x)|$ while not changing the uniform derivative bound.
We omit a full proof due to space constraints; see, e.g., \cite{tao2018} for the
mechanics of how this is done in general.
\end{proof}

We emphasize that by Lemma~\ref{lem:beta_bound} $M(n) \le C$ for some
universal constant $C \approx 9.4248$, with even better values available at,
e.g., prime $n > 100$. There, $C \approx 5.8146$ suffices.
\begin{proof}[Proof of Prop.~\ref{propn:nazarov_power}]
Thm.~\ref{thm:nazarov}, with $p_0 = 0$ and $p_j = P_j/\sum_j P_j$ for $0 <
j < n$ yields a $\vb$ with $|\vb|_\infty \le M(n)$ and $|(\vb,\psi_j)|^2 \ge p_j$
for $0 < j < n$. Without loss, we may assume that $(\vb,\psi_0) \le 0$, else
simply flip signs. Stitching the $\psi_j$ back to complex exponentials and
recalling the upper bound $P \le n^2\rho(1-\rho)$, this
gives $|\hat{\vb}_j|^2 \ge P_j/(\rho(1-\rho))$. Consider $\va = (\vb+M(n))/2M(n)$.
Then, $\va \in [0,1]$, $\rho(\va) \le 0.5$, and $|\hat{\va}_j|^2 \ge P_j/(4M(n)^2\rho(1-\rho))$
for $0 < j < n$. We are now in a similar situation to that of
Prop.~\ref{propn:spec_flat_simple}, except with an extra $M(n)^2$
factor, and the fact that $\rho(\va) \le 0.5$ instead of $\rho(\va) =
0.5+o(1)$. The latter is no problem, as lower $\rho$ only helps us with the
shot noise term, and the former simply multiplies the 2 of~\eqref{eqn:spec_flat_simple} by $M(n)^2$.
\end{proof}

\subsection{Greedy algorithm}
\label{sec:greedy}
Here we propose a (heuristically) efficient algorithm to construct vectors
$\va$ that satisfy the conditions of Prop.~\ref{propn:nazarov_power}. This
algorithm has its roots in Nazarov's original proof. At a high level,
Nazarov's theoretical construction boils down to finding a ``sign
cort\`{e}ge''~\cite[p. 6]{nazarov1997} that is globally optimal for a certain
real-valued Boolean function of $n$ signs, taking exponential time in the worst case.
However, a closer examination of Nazarov's proof reveals that one simply needs
a sign cort\`{e}ge that is locally optimal in the sense of Hamming geometry for
the proof to work. Our observation suggests a natural greedy algorithm where one
starts with a random cort\`{e}ge, and then flips one sign at a time if it improves the objective,
repeating until no further improvement is possible. In our simulations
\footnote{Code:\url{https://github.com/gajjanag/apertures}}
this runs very fast. For example, on our standard laptop, we can generate
apertures for $n=2000$ in $4$ seconds. This superficially resembles the situation of the
simplex algorithm and the smoothed analysis of~\cite{spielman2001},
or more directly recent work on max-cut~\cite{angel2017}. Direct application of the
methods of~\cite{angel2017} to obtain theoretical guarantees runs into
difficulties with the nonlinear change in objective with a single bit flip in
our setting, unlike the linear change for max-cut. As such, we defer
theoretical study of the greedy algorithm given here to future work.

\subsection{Simulations}
We give a simple illustration in Fig.~\ref{fig:mmse_plot} which confirms the
following intuition based on our main results~\cref{eqn:spec_flat_simple,eqn:spec_flat_comp,eqn:nazarov_power}.
With an i.i.d.\ scene prior, one would prefer using the spectrally flat construction
as opposed to the one coming from Nazarov's theorem due to the smaller
constant. On the other hand, with a strong prior---e.g., a bandlimited one---the
waterfilling becomes highly skewed, and one would favor the one coming
from Nazarov's theorem as it takes into account such strong skewing of the
desired spectrum. For completeness, we also include the performance of a
random on-off sequence with density $\rho$~\cite{yedidia2018}, where $\rho$ is
optimized over $[0,1]$ for each $t$.
\begin{figure}[t]
\begin{minipage}[b]{1.0\linewidth}
\centering
\centerline{\includegraphics[width=1.0\linewidth]{./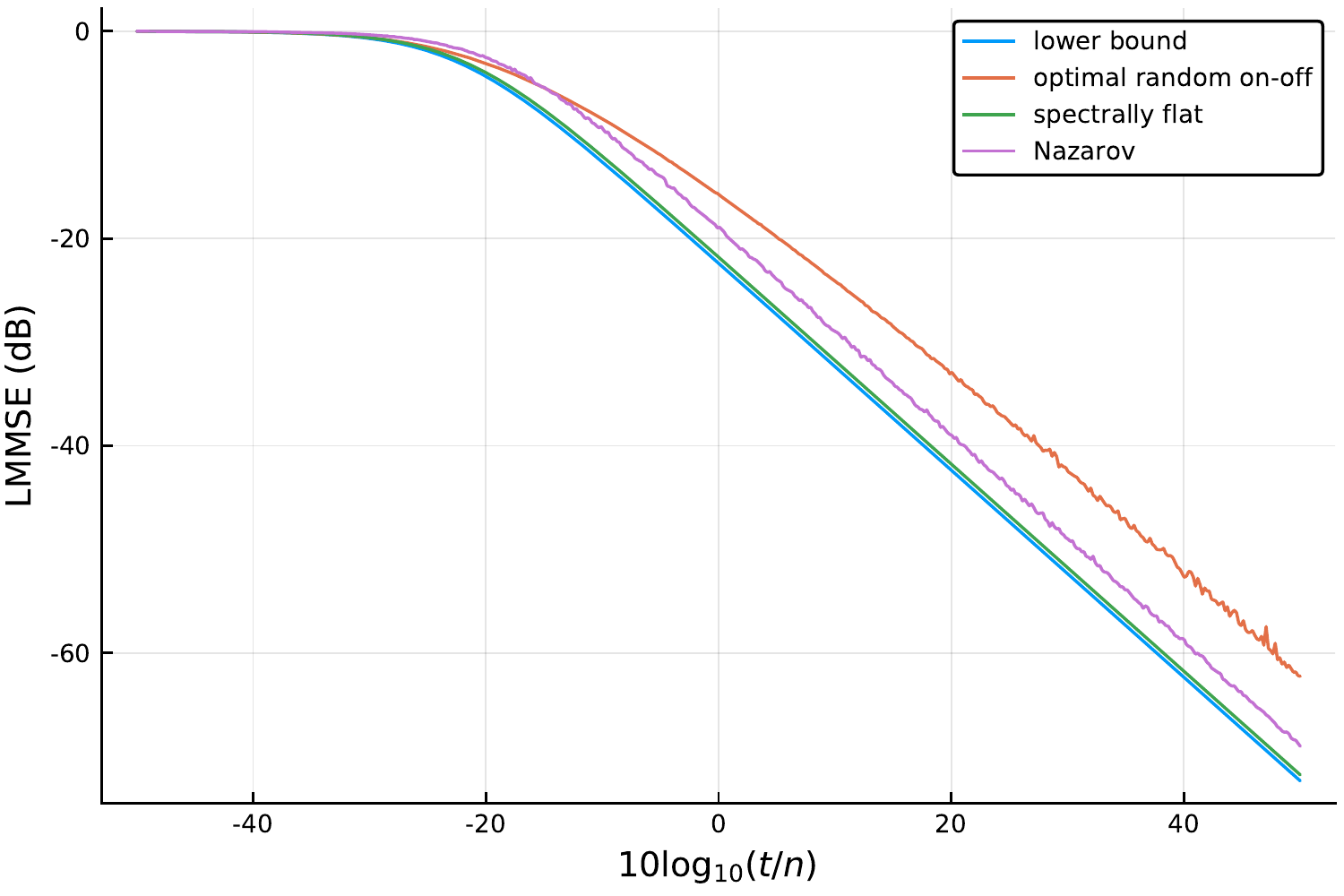}}
\centerline{(a) i.i.d prior ($d(x)=\theta$ for $0 \le x \le 1/2$)}
\end{minipage}
\begin{minipage}[b]{1.0\linewidth}
\centering
\centerline{\includegraphics[width=1.0\linewidth]{./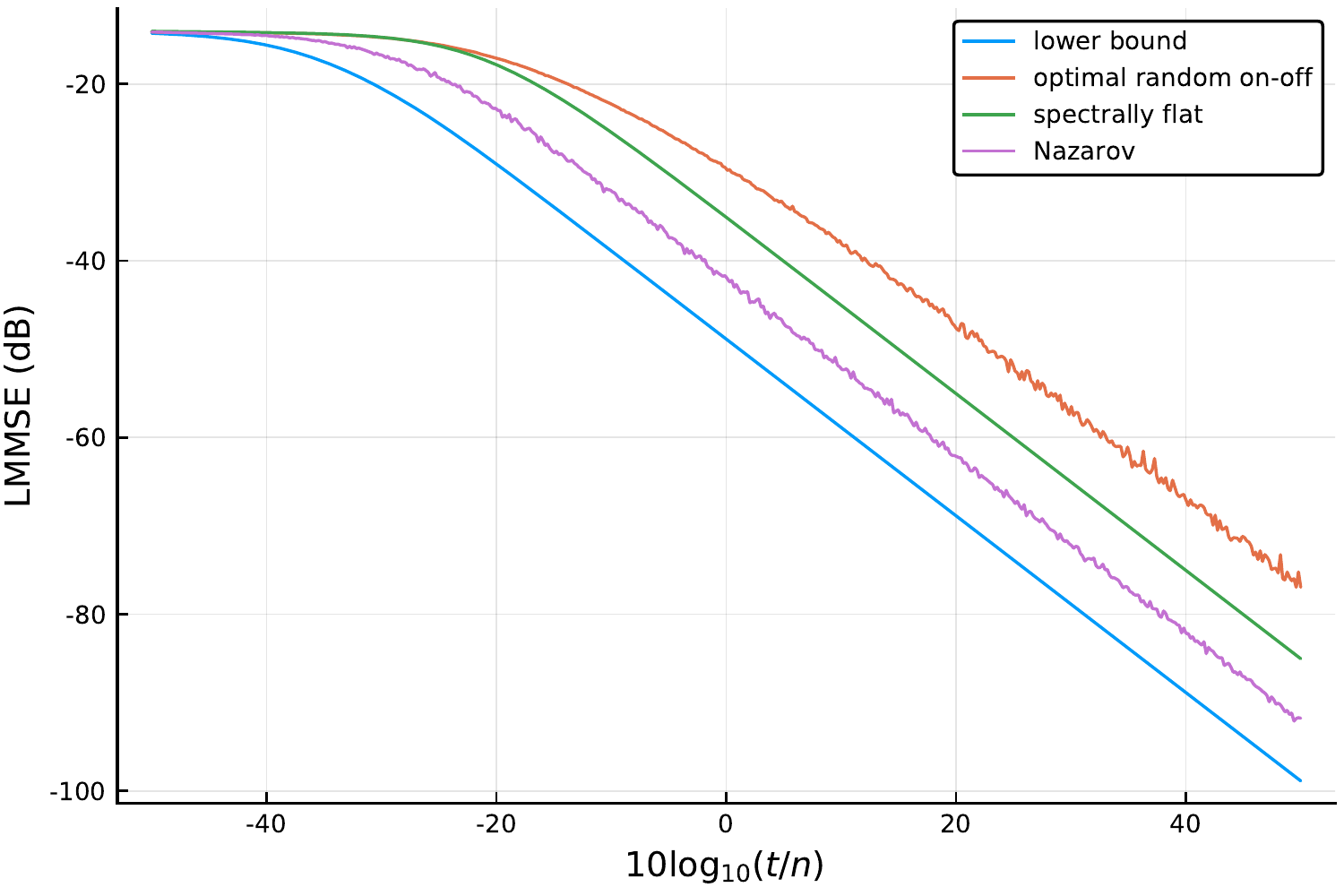}}
\centering{(b) bandlimited prior ($d(x) = \theta$ for $0 \le x \le s-r$,
$0$ for $x \ge s+r$, and $\theta(s+r-x)/(2r)$ otherwise for $0 \le x \le 1/2$)}
\end{minipage}
\caption{$n=677,\theta=1,W=J=0.001, s=0.02, r=0.005$. We use the quartic residue construction for
spectrally flat. Jaggedness of the Nazarov plot comes from the fact that in
general the spectrum allocation varies with $t$ and we randomly seed the sign cort\`{e}ge.}
\label{fig:mmse_plot}
\end{figure}
\section{Discussion and Future Work}
\label{sec:conclusion}
Our refined analysis of a model drawing heavily from~\cite{yedidia2018} yields
tight conclusions across all scene correlation patterns and noise regimes, with sharp
conclusions available in some specific scenarios. Moreover, we give
heuristically efficient algorithms for the generation of optimal coded
apertures. We also note that similar conclusions to our main
results~\cref{eqn:spec_flat_simple,eqn:spec_flat_comp,eqn:nazarov_power}
also hold for MI and Gaussian statistics of~\cite{yedidia2018},
simply because of the form of the expression for MI.

Furthermore, we note that our conclusions generalize naturally to 2D
apertures, and in particular we have a tight
characterization of optimal coded apertures in that setting. Concretely, one
simply needs to take $\beta(n)^2$ as opposed to $\beta(n)$ due to the squaring
of the $l_1$ lower bound for the 2D DFT. The rest of the analysis of
Thm.~\ref{thm:nazarov} and Prop.~\ref{propn:nazarov_power} carries over
naturally, with the orthogonal basis provided by products of $\psi_j$.
We emphasize that this works regardless of the scene prior, even
ones which are not separable. With an i.i.d.\ prior, separable apertures are
optimal up to constants as in 1D, and in fact taking a product of spectrally flat apertures yields
natural analogs of Props.~\ref{propn:spec_flat_simple},~\ref{propn:spec_flat_comp}.
However, with other priors, it seems like one needs the generality provided by
Thm.~\ref{thm:nazarov}. This work thus also answers the question of 2D
apertures raised in~\cite{yedidia2018}. We also view experimental verification
of these ideas as a worthwhile task.

As noted in~\cite{yedidia2018},~\cite{veeraraghavan2007} raises the
question of whether continuous-valued masks perform better than binary-valued
ones. This work sheds some light on this: the solution of Nazarov which we have
shown is tight does seem to use the flexibility of the $l_\infty$ norm in an
essential way; see, e.g., \cite[p. 12]{green2004} for more on this. And more
specifically, we have numerical evidence for finite $n$; to give a
concrete example, for $n=13$, the mask $[1, 0, 1, 0, 0, 1, 1, 0, 1, 1, 0, 0,
0]$ has optimal LMMSE for an i.i.d.\ scene over binary-valued masks for
$\rho=6/13, \theta=0.01, W=J=0.001, t=130$, but is improved upon by the
continuous-valued mask whose first entry is equal to $\epsilon$ and whose
$i$\/th entry is equal to $1-\epsilon/6$ if $i-1$ is a quadratic
residue modulo 13, and $0$ otherwise, for $0.26 \le \epsilon \le 0.34$.

Although Prop.~\ref{propn:nazarov_power} shows universal tightness across all
priors, even ``extreme'' ones like bandlimited ones, the constant is worse
than that for a spectrally flat construction for i.i.d.\ scenes. The better
performance of spectrally flat constructions over the ones inspired by
Nazarov's theorem seems to extend to other ``natural''
priors like the $f^{-\gamma}$ one, as the waterfilling still yields something
that is nearly ``flat''. It might be interesting to quantify and understand
the ``flatness'' of the waterfilling for ``natural'' priors.

One issue that we have not addressed here or in~\cite{yedidia2018} is the equal scaling
of $n$ at both sensor and scene. One natural way to address this is letting
$\mA$ be $m \times n$, or alternatively one could study a continuous model.
Another issue is obtaining a good understanding of mask/lens combinations. This will require not only updates
to the simple propagation model studied here and in~\cite{yedidia2018},
but also a refined understanding of the cost tradeoffs between lenses and apertures.

Stepping back from imaging problems, one may ask the question of where
else Nazarov's theorem can be used in applied contexts, something also
raised implicitly in~\cite{boche2016}. For example, as Nazarov's theorem does not care
about orthogonality, but merely a $l_2$ estimate like Parseval's theorem, one
can use it for frames as well as bases, or for anything satisfying a
restricted isometry property. Another example is the fact that we
merely use the $l_\infty$ case of his theorem which works for all $l_p$
spaces.

\section{Acknowledgement}
Ganesh Ajjanagadde thanks Prof. Henry Cohn for discussions on pseudorandomness.

\vfill\pagebreak

\bibliographystyle{IEEEbib}
\bibliography{mask}

\end{document}